\newtheorem{definition}{Definition}
\newtheorem{theorem}{Theorem}
\newtheorem{proposition}{Proposition}
\setlist{itemsep=2pt, topsep=4pt, leftmargin=*}
\title{\textbf{Contract2Plan: Verified Contract-Grounded Retrieval-Augmented Optimization\\
for BOM-Aware Procurement and Multi-Echelon Inventory Planning}}
\author{
  Sahil Agarwal%
  \thanks{Correspondence: \texttt{sahil@foundincache.com}}%
  \thanks{The author acknowledges the use of AI language assistants for stylistic editing and reference checking. All scientific content, modeling choices, analysis, and conclusions are the author’s own.}\\
  Independent Researcher \textbar\ Found in Cache
}
\date{\today}
\begin{document}
\maketitle

\begin{abstract}
Procurement and inventory planning in real supply chains is governed not only by demand forecasts and bills of materials (BOMs), but also by operational terms embedded in contracts, tenders, and supplier documents: minimum order quantities (MOQs), lead-time commitments, price breaks, allocation caps, penalties, delivery terms, and substitution approvals. Large language models (LLMs) can extract structured constraints from such text, yet extraction-only or LLM-only decision pipelines are brittle. Missed clauses, unit errors, and unresolved contradictions can silently create infeasible plans or contract violations, and BOM coupling amplifies small upstream errors.

We introduce \textbf{Contract2Plan}, a verified GenAI-to-optimizer pipeline that (i) retrieves relevant clauses from multi-source documents with provenance, (ii) extracts a typed constraint schema with evidence spans and confidence, (iii) compiles extracted constraints into a BOM-aware mixed-integer planning model, and (iv) applies a solver-driven \textbf{verifier} that enforces grounding and eligibility, detects contradictions and missing fields, checks feasibility (via IIS or slack-minimization diagnosis), and triggers an iterative \textbf{repair loop}. Crucially, we strengthen the core claim with a compliance section: for monotone feasibility constraints (MOQ, lead time, capacity, cadence), conservative repair yields contract-safe feasibility under explicit assumptions; for non-monotone or exception-heavy clauses, the system abstains and escalates to humans rather than guessing.

To quantify tail risk in a risk-illustrative setting, we report a self-contained synthetic micro-benchmark (500 instances; horizon $T=5$) computed by exact enumeration over $9^5=59{,}049$ order schedules per instance under a realistic execution model (suppliers uplift sub-MOQ orders; late arrivals trigger emergency purchases). Extraction-only planning exhibits heavy-tailed regret relative to verified planning: mean \$142.33 (5.40\% of mean optimal cost; 95\% bootstrap CI \$[113.67,171.07]), 90th percentile \$587.74, 99th percentile \$1569.61, maximum \$2242.22, and planned MOQ-violation incidence 16.6\% (95\% bootstrap CI [13.4\%, 20.0\%]).
\end{abstract}

\textbf{Keywords:} retrieval-augmented generation, contract understanding, supply chain planning, MILP, inventory optimization, verification, provenance, compliance.

\section{Introduction}
Procurement terms are operational constraints. A planner may have a demand forecast, a detailed BOM, and a sound replenishment heuristic, yet still fail in execution if a contract addendum tightens an MOQ, if a seasonal clause extends lead times, if an allocation cap throttles supply, or if a discount tier is applied without meeting eligibility thresholds. In practice these terms are scattered across heterogeneous sources: master supply agreements, exhibits, addenda, tender attachments, catalogs, emails, and internal policy documents. Translating this text into structured, auditable planning inputs remains labor-intensive and error-prone.

LLMs are attractive for accelerating translation from text to structured fields, and recent work explores LLM agents for inventory management \cite{quan2024invagent}. However, using LLM outputs directly for decisions (or trusting extracted constraints without verification) is unsafe in high-coupling settings (BOM explosion, multi-echelon flows). Common failure modes include:
\begin{itemize}
\item \textbf{Omission:} relevant clauses are not retrieved or not extracted.
\item \textbf{Hallucination and unit errors:} a model invents a value or mis-parses units (days vs weeks; currency; tier thresholds).
\item \textbf{Inconsistency:} multiple documents disagree (e.g., MOQ 100 in the master agreement, MOQ 150 in an addendum).
\item \textbf{Condition or scope miss:} applicability is wrong (effective dates, regions, product scope).
\end{itemize}

These errors are not cosmetic. Small mistakes can flip feasibility, induce late arrivals that trigger emergency buys, or silently violate contractual terms. Because procurement and production are coupled through BOMs and lead times, a single upstream misread clause can cascade into multi-period service failures.

\paragraph{Illustrative failure vignette (synthetic but realistic).}
Consider a manufacturer with a two-level BOM where a single constrained component $p$ gates assembly of a high-volume finished good $f$.
The master agreement states an MOQ of 100 units and a 6-week lead time for $p$.
An addendum (effective May 1) raises MOQ to 150 for deliveries to site MX-01 and introduces a peak-season lead time of 10 weeks for orders placed in Aug--Oct, while a pricing exhibit offers a discounted tier at 150 units per PO line.
A text-to-JSON extractor that retrieves only the master agreement (or mis-scopes the addendum) may plan an order of 120 units in July and assume 6 weeks lead time.
In execution, the supplier uplifts the order to 150 (changing cash flow and inventory) and, if the order is placed in peak season, ships later than planned, forcing emergency buys or production shortfalls.
The resulting failure is not merely higher cost: it creates a compliance exposure (tier eligibility and MOQ terms), a service impact (late arrivals), and a compounded planning error because BOM coupling propagates the component mistake to multiple periods and products.
This paper treats such failures as the default risk profile of contract-grounded planning and designs verification as a first-class gate before decisions are emitted.

\textbf{Core idea.} Contract2Plan separates \emph{constraint extraction} from \emph{decision optimization} and inserts a verifier that treats extracted constraints as hypotheses: a constraint can influence planning only after it is grounded to evidence, validated for internal consistency, and shown to be solver-feasible (or conservatively repaired where safe).

\textbf{Contributions.}
\begin{itemize}
\item \textbf{Task definition:} contract-grounded, BOM-aware procurement and planning with auditable provenance.
\item \textbf{Method:} a Verified RAG-to-Optimization loop (retrieval $\rightarrow$ extraction $\rightarrow$ compilation $\rightarrow$ solver verification $\rightarrow$ repair $\rightarrow$ explanation).
\item \textbf{Compliance guarantees:} a theorem-level statement of which contractual constraint classes are safely handled by conservative repair and which require human confirmation.
\item \textbf{Auditability:} decision cards linking each material constraint and recommendation back to evidence spans.
\item \textbf{Computed risk quantification:} a reproducible micro-benchmark computed by exact enumeration demonstrating heavy-tailed economic and compliance risk from MOQ/lead-time extraction errors.
\end{itemize}

\section{Related Work and Positioning}
\textbf{LLMs for inventory and supply chain decision systems.} InvAgent proposes a large language model based multi-agent system for inventory management \cite{quan2024invagent}. Contract2Plan differs in a foundational way: it does not ask an LLM to be the final decision-maker. Instead, it uses LLMs for evidence retrieval and schema-constrained extraction, and insists that feasibility and compliance are enforced by explicit schemas and optimization solvers.

\textbf{LLMs for optimization modeling and tool use.} OptiBench emphasizes that realistic optimization problem solving requires solver calls and checkable outputs \cite{yang2024optibench}. Contract2Plan follows the tool-use principle but targets a different failure surface: the correctness, scope, and precedence of contract-derived constraints that define the optimization model.

\textbf{MILP learning and benchmarks.} A foundation-model perspective for MILP has motivated large-scale instance generation via MILP-Evolve \cite{li2024milp}. Contract2Plan is complementary: we focus on reading and verifying constraints from procurement text rather than learning to solve MILPs.

\textbf{Contract understanding and clause retrieval.} CUAD \cite{hendrycks2021cuad} and ContractNLI \cite{koreeda2021contractnli} show that contract clause identification and evidence grounding are challenging even on expert-annotated datasets. ACORD provides an expert-annotated retrieval benchmark for contract clause retrieval \cite{wang2025acord}. Contract2Plan uses similar building blocks (retrieval, evidence spans) but couples them to optimization compilation and solver verification.

\textbf{Inventory control and hybrid methods.} Deep reinforcement learning has been surveyed for inventory control, highlighting modeling and evaluation challenges \cite{boute2022drlroadmap}. Hybrid methods combining DRL with stochastic programming have also been explored \cite{stranieri2024drlstoch}. Contract2Plan is compatible with these approaches, but addresses a different bottleneck: transforming contract text into verified constraints with auditable compliance.

\section{Background and Problem Setting}
\subsection{BOM-aware planning}
A bill of materials (BOM) is a directed acyclic graph linking finished goods to subassemblies and components. Let $a_{p,f}$ denote the units of component $p$ required per unit of finished good $f$. Shared components couple multiple finished goods, and small errors in component-level constraints (MOQ, lead time, capacity) can propagate into service failures.

BOMs commonly include alternates and substitutes subject to approved vendor lists (AVL) and engineering or quality approvals. Contracts can further constrain substitution (forbidden unless approved, approval required for certain lots, or restricted for regulated components). These policies must be treated as hard constraints.

\subsection{Multi-echelon inventory planning}
Multi-echelon networks hold inventory at multiple locations (suppliers, plants, distribution centers). Transportation lead times and capacities couple decisions across time and nodes. Feasibility and cost are highly sensitive to lead-time modeling: a one-period lead-time error can shift availability and trigger emergency purchases or stockouts.

\subsection{Contractual procurement constraints}
Contract terms that materially affect planning include MOQs, lead-time commitments, tiered pricing eligibility, allocation caps, order cadence requirements, and substitution approvals. Many such terms appear in tables, exhibits, or cross-referenced sections, which makes extraction error-prone. Contract2Plan assumes procurement text is authoritative for many terms, but unstructured, noisy, and sometimes contradictory.

\section{Task Definition: Contract-Grounded Planning}
\subsection{Inputs}
An instance contains:
\begin{itemize}
\item \textbf{Documents} $\mathcal{D}$: contracts, addenda, tenders, emails, catalogs (possibly OCR-derived).
\item \textbf{Master data} $\mathcal{M}$: supplier IDs, item master, AVL, unit/calendar normalization rules.
\item \textbf{BOM} $\mathcal{B}$: coefficients $a_{p,f}$ and substitute/alternate relationships.
\item \textbf{Network} $\mathcal{G}$: echelons/nodes, arcs, shipping lead times/costs/capacities.
\item \textbf{Demand} $d_{n,f,t}$ for finished goods over $t=1,\dots,T$.
\end{itemize}

\subsection{Outputs}
The system returns:
\begin{itemize}
\item a structured constraint set $\mathcal{C}$ with field-level evidence spans and confidence,
\item a feasible plan $\Pi$ (orders, production, shipments, inventories; optional backlog or emergency buys depending on the service model),
\item decision cards $E$ mapping decisions to binding constraints and provenance spans.
\end{itemize}

\subsection{Evaluation axes}
Contract2Plan is designed around four axes:
\begin{enumerate}
\item \textbf{Extraction quality:} field accuracy and evidence correctness (span-level grounding).
\item \textbf{Feasibility and compliance:} plans satisfy compiled constraints; conservative repair provides contract-safe feasibility for specific clause classes.
\item \textbf{Decision quality:} cost and service performance under the chosen execution/service model.
\item \textbf{Auditability:} each material constraint and decision is traceable to document evidence.
\end{enumerate}

\section{Method: Contract2Plan}
\subsection{System architecture}
Figure~\ref{fig:arch} shows the high-level pipeline. The design principle is explicit: extraction proposes constraints, but the solver verifier is the compliance gate. The system does not emit a plan unless constraints are grounded and the compiled model is feasible.

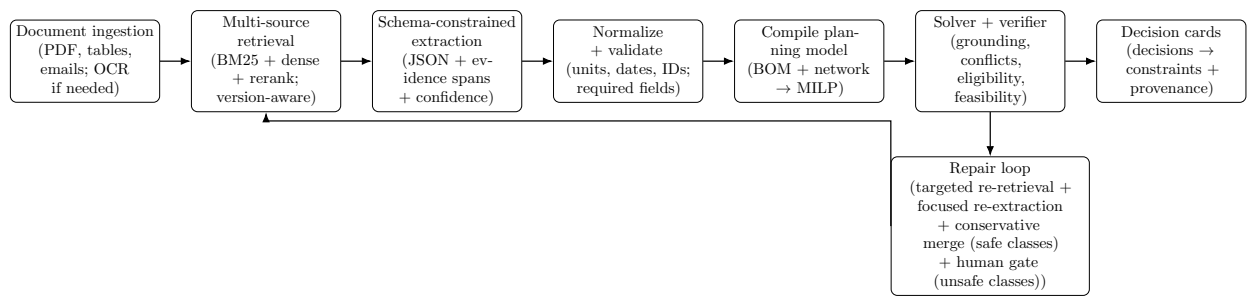
\begin{figure}[H]
\centering
\resizebox{\textwidth}{!}{%
\begin{tikzpicture}[
  font=\small,
  box/.style={
    draw, rounded corners, align=center,
    text width=3.1cm, minimum height=1.05cm
  },
  arrow/.style={-Latex, thick},
  node distance=8mm and 7mm
]
\node[box] (ingest) {Document ingestion\\(PDF, tables, emails; OCR if needed)};
\node[box, right=of ingest] (retrieve) {Multi-source retrieval\\(BM25 + dense + rerank;\\version-aware)};
\node[box, right=of retrieve] (extract) {Schema-constrained extraction\\(JSON + evidence spans\\+ confidence)};
\node[box, right=of extract] (norm) {Normalize + validate\\(units, dates, IDs;\\required fields)};
\node[box, right=of norm] (compile) {Compile planning model\\(BOM + network\\$\rightarrow$ MILP)};
\node[box, right=of compile] (verify) {Solver + verifier\\(grounding, conflicts,\\eligibility, feasibility)};
\node[box, right=of verify] (cards) {Decision cards\\(decisions $\rightarrow$\\constraints + provenance)};

\node[box, below=10mm of verify, text width=4.2cm] (repair) {Repair loop\\(targeted re-retrieval + focused re-extraction\\+ conservative merge (safe classes)\\+ human gate (unsafe classes))};

\draw[arrow] (ingest) -- (retrieve);
\draw[arrow] (retrieve) -- (extract);
\draw[arrow] (extract) -- (norm);
\draw[arrow] (norm) -- (compile);
\draw[arrow] (compile) -- (verify);
\draw[arrow] (verify) -- (cards);

\draw[arrow] (verify) -- (repair);
\draw[arrow] (repair.west) |- ($(retrieve.south)+(0,-2mm)$) -- (retrieve.south);

\end{tikzpicture}
}
\caption{Contract2Plan pipeline. Constraints are not trusted until they pass grounding, consistency, eligibility, and solver-based feasibility checks. Failures trigger targeted repair or explicit human confirmation.}
\label{fig:arch}
\end{figure}

\subsection{Constraint schema with provenance}
Contract2Plan uses a typed schema where each field includes: value, normalized units, applicability conditions, evidence spans, and confidence. A field is blocked if it lacks evidence spans. This makes the system robust to hallucinations: an ungrounded field cannot silently change the feasible region.

\begin{table}[H]
\centering
\begin{tabular}{llp{7.1cm}}
\toprule
Field & Type & Notes (what the verifier enforces) \\
\midrule
supplier\_id, part\_id & string & Must map to master data; ambiguous mapping triggers a gate. \\
effective\_start/end & date & Used for precedence and applicability windows. \\
moq & integer / null & Null only if evidence explicitly states ``no MOQ''. \\
lead\_time\_periods & int / range & Must be normalized to planning periods. \\
capacity\_per\_period & integer / null & Allocation caps; scope must be explicit if applicable. \\
price\_tiers & list of (threshold, unit\_price) & Eligibility constraints must be enforced in optimization. \\
substitution\_policy & enum & Allowed / forbidden / allowed-with-approval; evidence required. \\
evidence\_spans & list & (doc\_id, start, end, version) to support auditing. \\
confidence & float & Drives repair ordering and abstention. \\
\bottomrule
\end{tabular}
\caption{Planning-focused extraction schema. Every non-default field must carry evidence spans; otherwise it is gated.}
\label{tab:schema}
\end{table}

\subsection{Document ingestion and table handling}
Many failures originate before extraction:
\begin{itemize}
\item \textbf{Version drift:} addenda supersede base agreements, but are indexed without precedence metadata.
\item \textbf{Table parsing:} pricing schedules and allocation tables may be embedded as scanned images or complex layouts.
\item \textbf{Scope leakage:} chunking can remove headers that define applicability (dates, regions, SKUs).
\end{itemize}
Contract2Plan recommends clause-aware chunking (header + clause) and table-aware extraction for pricing and tier schedules. Each chunk is stored with document ID, version metadata, and offsets to enable verifiable provenance.

\subsection{Retrieval with provenance and versioning}
Retrieval is hybrid and version-aware:
\begin{itemize}
\item Sparse retrieval (BM25) for exact phrasing and identifiers.
\item Dense retrieval for paraphrases (``minimum purchase'' vs ``MOQ'').
\item Reranking to select top-$k$ evidence spans per field query.
\item Versioning and scope filters using document metadata (addendum vs master; effective dates; SKU scope).
\end{itemize}
All evidence spans are stored as provenance objects $(doc\_id, version, start, end)$ to support auditing and deterministic re-checks.

\subsection{Schema-constrained extraction and normalization}
Extraction emits JSON constrained to the schema. A deterministic normalization step then:
\begin{itemize}
\item converts units (days to periods; currency normalization when applicable),
\item parses tier tables and thresholds,
\item canonicalizes dates and effective windows,
\item links supplier/part strings to master IDs (with ambiguity handling).
\end{itemize}
Normalization is treated as a safety step: many production failures are unit or ID linkage errors rather than subtle semantic errors.

\subsection{Constraint consolidation and precedence resolution}
Conflicts are expected. The verifier clusters extracted fields by:
\[
(\text{supplier},\ \text{part},\ \text{field},\ \text{scope},\ \text{effective window})
\]
and attempts precedence resolution using:
\begin{enumerate}
\item effective dates (newer supersedes older),
\item explicit amendment language when available,
\item document-type ranking (signed addendum $>$ master agreement $>$ email).
\end{enumerate}
If precedence remains ambiguous, Contract2Plan applies conservative repair only for clause classes where conservatism is provably contract-safe (Section~\ref{sec:compliance}). Otherwise it abstains and requests confirmation.

\subsection{Optimization compilation (deterministic MILP backbone)}
Contract2Plan compiles consolidated constraints into a planning model. We present a deterministic MILP backbone sufficient for compliance verification and auditing. Robust or stochastic extensions are compatible but not required for the core compliance theorems.

The deterministic MILP backbone is intentionally minimal and compliance-oriented; richer stochastic, robust, or decomposition-based planning models can be substituted without changing the verifier logic, constraint grounding requirements, or abstention guarantees.

\textbf{Indices and sets.} Suppliers $s\in S$, parts $p\in P$, finished goods $f\in F\subseteq P$, nodes $n\in N$, time $t\in\{1,\dots,T\}$.

\textbf{Decision variables.} Orders $x_{s,p,t}\ge 0$ and activation $z_{s,p,t}\in\{0,1\}$; inventory $I_{n,p,t}\ge 0$; production $y_{n,f,t}\ge 0$; emergency buys $e_{n,p,t}\ge 0$ (a recourse variable used for feasibility under strict service).

\textbf{MOQ constraints.}
\begin{align}
x_{s,p,t} &\ge MOQ_{s,p}\, z_{s,p,t}, \label{eq:moq1}\\
x_{s,p,t} &\le M_{s,p}\, z_{s,p,t}. \label{eq:moq2}
\end{align}

\textbf{Allocation / capacity caps.}
\begin{equation}
x_{s,p,t} \le Cap_{s,p,t}\quad \forall s,p,t. \label{eq:cap}
\end{equation}

\textbf{Tier eligibility (compliance-critical).} Pricing is not a post-hoc annotation. The model must enforce that a discount tier is selectable only if the order meets its threshold. With tier binaries $u_{s,p,t,k}\in\{0,1\}$ and thresholds $\tau_k$:
\begin{align}
\sum_{k=1}^K u_{s,p,t,k} &= z_{s,p,t}, \label{eq:tierone}\\
x_{s,p,t} &\ge \sum_{k=1}^K \tau_k\, u_{s,p,t,k}. \label{eq:tierelig}
\end{align}

\subsection{Verifier and repair loop (expanded and auditable)}
The verifier is the compliance gate between extraction and execution. It is multi-layered so most failures are caught before solver calls, while solver diagnostics provide principled localization for hard infeasibility cases.

\textbf{Verifier responsibilities.} Given extracted constraints $\mathcal{C}$ and evidence spans $R$, the verifier must:
\begin{enumerate}
\item \textbf{Block unsupported fields:} no evidence, no effect on the optimization model.
\item \textbf{Enforce grounding integrity:} each numeric field must be supported by evidence spans (including table cells when applicable).
\item \textbf{Resolve contradictions safely:} precedence first; conservative merges only for provably safe classes; otherwise abstention.
\item \textbf{Enforce eligibility constraints:} tiers and substitutions must be encoded as constraints.
\item \textbf{Guarantee feasibility:} never emit a plan from an infeasible compiled model.
\item \textbf{Provide actionable diagnostics:} localize failures and suggest targeted repairs rather than generic retries.
\end{enumerate}

\textbf{Layer 1: schema and unit validation.} Checks include required identifiers, numeric sanity, tier-threshold monotonicity, canonical unit conversion, and well-formed applicability windows.

\textbf{Layer 2: provenance and grounding enforcement.} Each field must have evidence spans that support the value \emph{in context}. For example, if an evidence span contains ``MOQ 100'' but the scope header indicates it applies only to a different SKU family, the field is flagged as mis-scoped. Ungrounded or mis-scoped fields are blocked or sent to repair.

\textbf{Layer 3: cross-document consistency and applicability.} The verifier clusters constraints by $(supplier, part, field, effective\_window, scope)$, flags overlapping conflicts, and attempts precedence resolution. If unresolved, it applies conservative merges only for monotone feasibility constraints (Section~\ref{sec:compliance}); otherwise it gates to a human. The output of this layer is a consolidated constraint set where each consolidated value is either (i) resolved by precedence, or (ii) conservatively merged with an explicit justification.

\textbf{Layer 4: solver-based feasibility and infeasibility localization.} The verifier compiles the MILP and runs a feasibility check. If infeasible, it produces:
\begin{itemize}
\item IIS-based diagnosis when supported (a minimal conflicting subset), or
\item slack-minimization diagnosis otherwise:
\[
\min \sum_j w_j\xi_j \quad \text{s.t.}\quad g_j(x)\le \xi_j,\ \xi_j\ge 0.
\]
Large slacks identify which constraint families (MOQ, lead time, capacity, service targets) drive infeasibility.
\end{itemize}

\textbf{Repair actions.} Repairs are targeted: re-retrieve evidence for disputed fields, re-extract only those fields, normalize, and recompile. For unsafe clause types (exceptions, carve-outs, approvals), Contract2Plan abstains and returns minimal clarifying questions (for example, ``Which addendum is authoritative for MOQ on part P in region R after date D?'').

\subsection{Verifier/repair flow diagram}
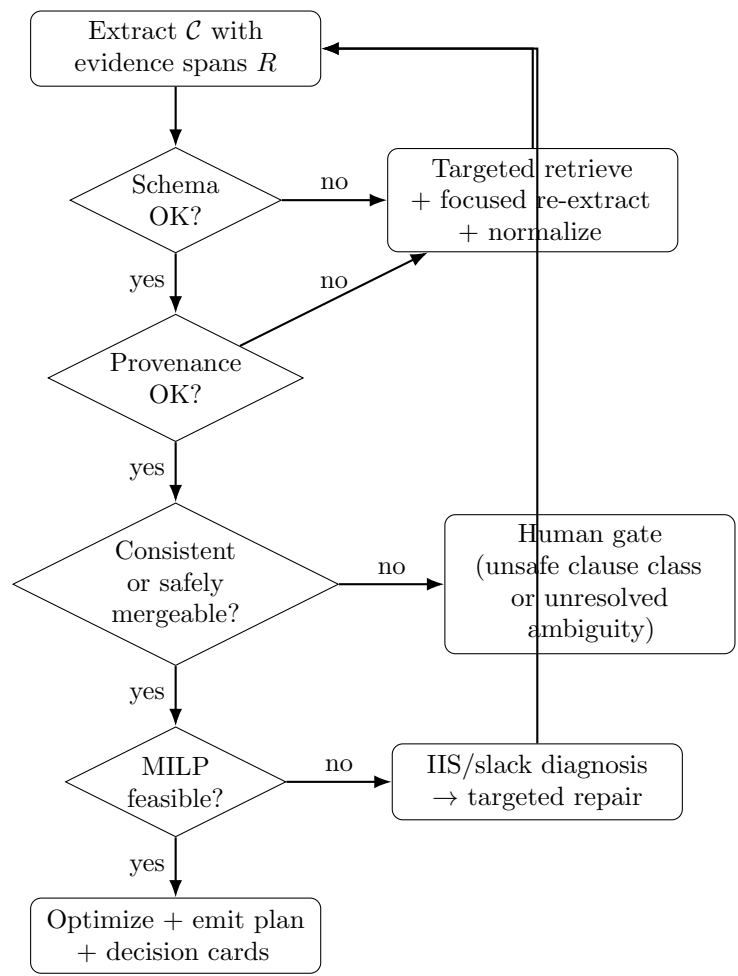
\begin{figure}[H]
\centering
\begin{tikzpicture}[
  font=\small,
  box/.style={draw, rounded corners, align=center, text width=3.6cm, minimum height=1.0cm},
  decision/.style={draw, shape=diamond, align=center, aspect=2, inner sep=1.5pt},
  arrow/.style={-Latex, thick},
  node distance=8mm and 10mm
]
\node[box] (c0) {Extract $\mathcal{C}$ with evidence spans $R$};
\node[decision, below=of c0] (schema) {Schema\\OK?};
\node[decision, below=of schema] (prov) {Provenance\\OK?};
\node[decision, below=of prov] (cons) {Consistent\\or safely\\mergeable?};
\node[decision, below=of cons] (feas) {MILP\\feasible?};

\node[box, right=14mm of schema] (repair1) {Targeted retrieve\\+ focused re-extract\\+ normalize};
\node[box, right=14mm of cons] (gate) {Human gate\\(unsafe clause class\\or unresolved ambiguity)};
\node[box, right=14mm of feas] (repair2) {IIS/slack diagnosis\\$\rightarrow$ targeted repair};

\node[box, below=of feas] (emit) {Optimize + emit plan\\+ decision cards};

\draw[arrow] (c0) -- (schema);
\draw[arrow] (schema) -- node[left]{yes} (prov);
\draw[arrow] (prov) -- node[left]{yes} (cons);
\draw[arrow] (cons) -- node[left]{yes} (feas);
\draw[arrow] (feas) -- node[left]{yes} (emit);

\draw[arrow] (schema) -- node[above]{no} (repair1);
\draw[arrow] (prov) -- node[above]{no} (repair1);
\draw[arrow] (repair1) |- (c0);

\draw[arrow] (cons) -- node[above]{no} (gate);
\draw[arrow] (feas) -- node[above]{no} (repair2);
\draw[arrow] (repair2) |- (c0);

\end{tikzpicture}
\caption{Verifier and repair logic. Contract2Plan loops on targeted evidence collection and constrained re-extraction. It escalates to humans when automation is unsafe.}
\label{fig:verifyflow}
\end{figure}

\subsection{Explainability via decision cards}
To support auditing, Contract2Plan emits a ``decision card'' per major recommendation (order, substitution choice, selected tier). A decision card includes:
\begin{itemize}
\item the decision (e.g., ``Order 200 units of part P from supplier S at period 3''),
\item the binding constraints that drove the decision (e.g., MOQ, tier threshold, capacity cap),
\item provenance spans for each binding constraint (document ID, offsets, version),
\item a short sensitivity note (which constraint change would alter the decision).
\end{itemize}
This makes it feasible for reviewers to confirm compliance and to spot systematic extraction errors (for example, if all MOQs trace to an outdated master agreement version).

\subsection{Contract excerpt walkthrough (document-like synthetic example)}
\label{sec:walkthrough}
This subsection provides a concrete, document-like walkthrough showing how Contract2Plan turns contract text into audited constraints and a solver-verified plan.
The excerpt is synthetic (for confidentiality) but written to reflect typical procurement language: scoped applicability, effective dates, tier schedules, and allocation caps.

\begin{figure}[H]
\centering
\fbox{%
\begin{minipage}{0.95\textwidth}
\small
\textbf{SUPPLY AGREEMENT ADDENDUM No. 3} \hfill \textbf{Effective: 2025-05-01}\\
\textbf{Supplier:} SUP-17 \hfill \textbf{Site scope:} MX-01\\[2pt]
\textbf{L1. Section 2.1 Minimum Order Quantity (MOQ).}
For Part \#88321 (``MCU-17'') delivered to MX-01, the MOQ per PO line is \textbf{150 units}.\\
\textbf{L2. Section 2.2 Volume condition.}
If cumulative quarterly volume for Part \#88321 to MX-01 is at least \textbf{600 units}, the MOQ is reduced to \textbf{100 units} for subsequent POs in that quarter.\\
\textbf{L3. Section 4 Lead Time.}
Standard lead time is \textbf{6 weeks}. Orders placed between \textbf{01-Aug} and \textbf{31-Oct} incur lead time \textbf{10 weeks}.\\
\textbf{L4. Exhibit B Price Schedule (per PO line).}
100--149 units: \$12.00 each; 150--299 units: \$11.30 each; $\ge$300 units: \$10.90 each.\\
\textbf{L5. Section 7 Allocation.}
Supplier may cap shipments to \textbf{250 units per month} for MX-01.\\
\end{minipage}}
\caption{Synthetic contract excerpt used for the walkthrough (written to resemble typical addendum structure).}
\label{fig:walkthrough_excerpt}
\end{figure}

\paragraph{Retrieval and extraction.}
Given a part master query (part name, part ID, supplier ID, site), the retriever pulls the spans containing MOQ, lead time, tiers, and allocation language.
The extractor emits a typed schema with evidence pointers back to specific lines in Figure~\ref{fig:walkthrough_excerpt}.

\begin{figure}[H]
\centering
\fbox{%
\begin{minipage}{0.95\textwidth}
\small\ttfamily
\{\\
\ \ "doc\_id": "Addendum-3",\\
\ \ "supplier\_id": "SUP-17",\\
\ \ "part\_id": "88321",\\
\ \ "scope": \{ "site": "MX-01" \},\\
\ \ "effective\_start": "2025-05-01",\\
\ \ "moq": 150,\\
\ \ "lead\_time\_weeks": \{ "standard": 6, "peak\_season": 10, "peak\_window": "Aug--Oct" \},\\
\ \ "capacity\_per\_month": 250,\\
\ \ "price\_tiers": [\\
\ \ \ \ \{ "threshold": 100, "unit\_price": 12.00 \},\\
\ \ \ \ \{ "threshold": 150, "unit\_price": 11.30 \},\\
\ \ \ \ \{ "threshold": 300, "unit\_price": 10.90 \}\\
\ \ ],\\
\ \ "conditions": [\\
\ \ \ \ \{ "type": "volume", "threshold": 600, "effect": "moq=100 for subsequent POs in-quarter" \}\\
\ \ ],\\
\ \ "evidence": [ "Addendum-3:L1", "Addendum-3:L2", "Addendum-3:L3", "Addendum-3:L4", "Addendum-3:L5" ]\\
\}\\
\end{minipage}}
\caption{Example schema-constrained extraction output with evidence pointers to the excerpt lines.}
\label{fig:walkthrough_json}
\end{figure}

\paragraph{Normalization and compilation.}
The normalization step converts lead time to planning periods (given the calendar definition), canonicalizes the site scope, and validates tier monotonicity.
Compilation then creates the MILP constraints for MOQ (Eqs.~(1)--(2)), tier eligibility (Eqs.~(4)--(5)), and allocation (Eq.~(3)).

\paragraph{Verifier behavior on conditional clauses.}
Line L2 introduces a conditional MOQ reduction tied to cumulative quarterly volume.
If the system has the necessary aggregation model and time alignment to encode this condition safely, it may compile it explicitly.
If not, Contract2Plan treats the conditional as a non-trivial applicability constraint: it either (i) conservatively enforces MOQ=150 (contract-safe for feasibility, potentially cost-suboptimal), or (ii) gates to human confirmation if the condition materially changes feasibility or economics.
In both cases, the decision card explicitly cites the evidence line and states whether the conditional was encoded or conservatively collapsed.

\paragraph{Outcome: auditable decisions.}
The emitted plan is accompanied by decision cards that identify which constraints were binding (MOQ, allocation, lead time window) and provide provenance back to Figure~\ref{fig:walkthrough_excerpt}.
This walkthrough illustrates the central design: extraction proposes constraints, but the verifier controls whether and how they affect the feasible region.

\section{Compliance Theorems and Abstention Policy}
\label{sec:compliance}
This section formalizes which constraint classes can be handled by conservative repair (safely) and which must be gated to human confirmation. The goal is not to over-claim legal correctness, but to precisely delimit what the system can guarantee under explicit assumptions.

\subsection{Constraint taxonomy}
We partition extracted terms into three practical classes.

\textbf{Class A: monotone feasibility constraints (safe for conservative repair).}
These shrink the feasible set in a known direction:
\begin{itemize}
\item MOQ: larger is more restrictive.
\item Lead time: larger (later arrival) is more restrictive.
\item Capacity caps: smaller is more restrictive.
\item Minimum order intervals: larger is more restrictive.
\end{itemize}

\textbf{Class B: eligibility constraints (safe if encoded as constraints).}
Economic terms like pricing are not monotone, but eligibility often is. For price breaks, it is safe to enforce that a tier can only be applied if its threshold is met (Equations \eqref{eq:tierone}--\eqref{eq:tierelig}). This guarantees that the solver cannot claim an ineligible tier.

\textbf{Class C: non-monotone or exception-heavy clauses (unsafe for automatic merge).}
These include nested exceptions, carve-outs, ambiguous cross-references, or approvals tied to external processes. Conservative numeric merging is not semantically safe here. Contract2Plan abstains unless authoritative precedence metadata resolves ambiguity.

\begin{definition}[Restrictiveness order]
For a constraint type $\tau$ in Class A, define a partial order $\preceq_\tau$ such that
$c_1 \preceq_\tau c_2$ means $c_2$ is at least as restrictive as $c_1$ (the feasible set under $c_2$ is a subset of that under $c_1$ for type $\tau$).
\end{definition}

\subsection{Conservative merge operator}
Let $\mathcal{V}_\tau$ be candidate values for a Class A constraint type $\tau$ for a fixed $(supplier, part, scope, effective\_window)$ when precedence cannot be established. Define:
\[
\mathrm{merge}_\tau(\mathcal{V}_\tau)=
\begin{cases}
\max \mathcal{V}_\tau, & \text{if larger is more restrictive (MOQ, lead time, interval),}\\
\min \mathcal{V}_\tau, & \text{if smaller is more restrictive (capacity cap).}
\end{cases}
\]

\begin{theorem}[Conservative feasibility implies contract-safe feasibility for Class A]
\label{thm:conservative}
Fix a planning instance (BOM, network, demand). For each Class A constraint type $\tau$, let $\mathcal{V}_\tau$ be the retrieved candidate values applicable to the same scope and effective window, and let $\widehat{c}_\tau=\mathrm{merge}_\tau(\mathcal{V}_\tau)$. Let $\widehat{\mathcal{C}}$ denote the conservatively merged constraint set.

Assume:
\begin{enumerate}
\item \textbf{Coverage:} the true applicable value $c^\star_\tau$ is contained in $\mathcal{V}_\tau$ for each $\tau$ (retrieval did not miss the strictest applicable clause).
\item \textbf{Correct monotone encoding:} the MILP encoding is monotone with respect to $\preceq_\tau$.
\end{enumerate}
Then any plan feasible under $\widehat{\mathcal{C}}$ is feasible under the true constraints $\mathcal{C}^\star$ for all Class A constraint types.
\end{theorem}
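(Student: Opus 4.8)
The plan is to reduce the multi-type claim to a single per-type, ceteris-paribus monotonicity argument and then close by intersecting over types. First I would fix an arbitrary Class A constraint type $\tau$ (with scope and effective window fixed as in the statement) and establish the key ordering fact $c^\star_\tau \preceq_\tau \widehat{c}_\tau$: the merged value is at least as restrictive as the true applicable value. Next I would translate this ordering into a containment of feasible regions using the Restrictiveness-order definition together with the correct-monotone-encoding assumption, and finally I would intersect these containments across all Class A types to obtain the global containment that yields the theorem.

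For the ordering step, note that on Class A types the candidate values are numeric (MOQ, lead time, interval length, capacity cap), so $\preceq_\tau$ is a total order in which ``more restrictive'' coincides with the larger value for MOQ/lead time/interval and with the smaller value for capacity caps. Because retrieval returns finitely many candidates, $\max \mathcal{V}_\tau$ and $\min \mathcal{V}_\tau$ exist, so $\mathrm{merge}_\tau(\mathcal{V}_\tau)$ is well defined and equals the $\preceq_\tau$-greatest element of $\mathcal{V}_\tau$. The Coverage assumption places $c^\star_\tau \in \mathcal{V}_\tau$, and by construction the merge selects the most restrictive candidate, so $c^\star_\tau \preceq_\tau \widehat{c}_\tau$ holds for every $\tau$.

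I would then invoke the Restrictiveness-order definition, which reads $c^\star_\tau \preceq_\tau \widehat{c}_\tau$ exactly as the statement that, for type $\tau$ in isolation, the feasible set induced by $\widehat{c}_\tau$ is contained in the feasible set induced by $c^\star_\tau$. The correct-monotone-encoding assumption guarantees that the MILP realizes this containment at the level of the compiled constraints: writing $F_\tau(c)$ for the set of plans satisfying the type-$\tau$ constraints at value $c$ with all other data held fixed, we have $F_\tau(\widehat{c}_\tau) \subseteq F_\tau(c^\star_\tau)$. Since the full feasible region of the compiled model is the intersection of the per-type constraint sets with the fixed BOM, network, and demand constraints, replacing $\mathcal{C}^\star$ by $\widehat{\mathcal{C}}$ only tightens the Class A factors while leaving every other factor unchanged. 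Intersecting $F_\tau(\widehat{c}_\tau) \subseteq F_\tau(c^\star_\tau)$ over all Class A types $\tau$ gives $\bigcap_\tau F_\tau(\widehat{c}_\tau) \subseteq \bigcap_\tau F_\tau(c^\star_\tau)$, and intersecting both sides with the common fixed backbone yields $F(\widehat{\mathcal{C}}) \subseteq F(\mathcal{C}^\star)$; hence any plan feasible under $\widehat{\mathcal{C}}$ is feasible under $\mathcal{C}^\star$.

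The main obstacle I anticipate is not the ordering arithmetic but justifying the ceteris-paribus factorization so that per-type monotonicity lifts cleanly to the joint region. Concretely, I must argue that tightening the value of one Class A type does not, through BOM coupling or the tier-eligibility linkage of Equations (4)--(5), accidentally enlarge the admissible set elsewhere; for instance, raising a component MOQ must shrink $F$ rather than relax a coupled constraint. This is precisely what the correct-monotone-encoding hypothesis is meant to supply, so the burden is to state it at the right granularity (monotone in each Class A datum with all remaining data fixed) and to confirm that the backbone constraints (1)--(3) are individually monotone in their respective parameters. Once that is pinned down, the intersection argument is routine.
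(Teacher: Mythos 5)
Your proposal is correct and follows essentially the same route as the paper's proof sketch: Coverage plus the definition of $\mathrm{merge}_\tau$ gives $c^\star_\tau \preceq_\tau \widehat{c}_\tau$ for each type, the monotone-encoding assumption converts this ordering into a per-type containment of feasible sets, and intersecting over all Class A types yields the claim. Your additional care about well-definedness of the merge on finite candidate sets and about the ceteris-paribus factorization (that tightening one type's parameter cannot relax coupled constraints such as Equations~\eqref{eq:tierone}--\eqref{eq:tierelig}) is a sound elaboration of exactly what the paper's ``correct monotone encoding'' hypothesis is asserting, not a departure from it.
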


\begin{proof}[Proof sketch]
For each $\tau$, $\widehat{c}_\tau$ is at least as restrictive as every element in $\mathcal{V}_\tau$, hence at least as restrictive as $c^\star_\tau$ (Coverage). By monotonicity of the encoding, the feasible set under $\widehat{c}_\tau$ is a subset of the feasible set under $c^\star_\tau$. Intersecting over $\tau$ yields the claim.
\end{proof}

\begin{proposition}[Tier eligibility prevents ineligible discount claims]
If tier $k$ is selectable only when the order quantity satisfies $x_{s,p,t}\ge \tau_k$ (Equation \eqref{eq:tierelig}), then no feasible solution can apply tier $k$ without meeting its threshold. Thus the optimizer cannot claim an ineligible price tier.
\end{proposition}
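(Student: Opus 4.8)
The plan is to prove the Proposition directly from the structure of the tier-eligibility constraints in Equations~\eqref{eq:tierone}--\eqref{eq:tierelig}, treating it as a short logical consequence of the MILP encoding rather than anything requiring optimization machinery. The statement is essentially a feasibility invariant: it asserts that the constraint set literally forbids any feasible point from selecting tier $k$ unless $x_{s,p,t}\ge\tau_k$. So the entire argument lives inside the definition of feasibility, and I would make the (implicit) encoding convention explicit first, namely that ``applying tier $k$'' is formalized as $u_{s,p,t,k}=1$ and that the realized unit price is tied to whichever tier binary is active.

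First I would fix an arbitrary feasible solution of the compiled MILP and an arbitrary tier index $k$, and suppose for contradiction that this solution applies tier $k$, i.e. $u_{s,p,t,k}=1$. Second, I would invoke Equation~\eqref{eq:tierone}, $\sum_{j=1}^K u_{s,p,t,j}=z_{s,p,t}$, together with the fact that the $u$'s are binary, to conclude that $z_{s,p,t}=1$ and that $u_{s,p,t,j}=0$ for all $j\neq k$; that is, exactly one tier is active and it is tier $k$. Third, I would substitute this into the eligibility inequality~\eqref{eq:tierelig}, $x_{s,p,t}\ge\sum_{j=1}^K\tau_j u_{s,p,t,j}$, so that the right-hand sum collapses to the single surviving term $\tau_k u_{s,p,t,k}=\tau_k$, yielding $x_{s,p,t}\ge\tau_k$. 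This directly contradicts the assumption that the threshold is not met, establishing that every feasible solution applying tier $k$ must satisfy $x_{s,p,t}\ge\tau_k$. Since the feasible solution and the index $k$ were arbitrary, the claim holds for all tiers simultaneously, and hence the optimizer, which only ranges over feasible points, can never return a plan that claims an ineligible tier.

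The main obstacle is not analytical but definitional: the argument above uses the non-negativity $\tau_j\ge 0$ and the exactly-one-active structure, and it silently assumes that the price actually charged is linked to the active tier binary $u_{s,p,t,k}$ rather than chosen independently of it. I would therefore state as a standing modeling assumption that the objective (or a linking constraint) ties the unit price to the tier binaries, so that ``claiming tier $k$'' and ``setting $u_{s,p,t,k}=1$'' are synonymous. A secondary subtlety is the tier-threshold monotonicity flagged in Layer~1 of the verifier ($\tau_1\le\tau_2\le\cdots\le\tau_K$): while not strictly needed for the one-sided guarantee proved here (which lower-bounds $x$ by the claimed tier's threshold), I would remark that monotonicity is what additionally guarantees the optimizer has no incentive to claim a lower tier than its order quantity permits, since a higher eligible tier offers a weakly better price. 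With these conventions made explicit, the proof is a two-line substitution, so I expect the write-up to be short and the only real care to go into pinning down the encoding semantics.
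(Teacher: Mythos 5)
Your proof is correct and is essentially the paper's own argument: the paper states this proposition without a written proof, treating it as an immediate consequence of Equations~\eqref{eq:tierone}--\eqref{eq:tierelig}, and your two-line substitution (the binary sum in \eqref{eq:tierone} forces $z_{s,p,t}=1$ and $u_{s,p,t,j}=0$ for $j\neq k$, collapsing \eqref{eq:tierelig} to $x_{s,p,t}\ge\tau_k$) is exactly that intended direct verification, with the minor note that once you have the exactly-one structure the non-negativity of the $\tau_j$ is not actually needed. Your explicit statement of the encoding convention---that ``applying tier $k$'' means $u_{s,p,t,k}=1$ and that the charged price must be linked to the tier binaries---makes precise an assumption the paper leaves implicit, which is a worthwhile addition rather than a deviation.
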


\subsection{Abstention policy: when humans must confirm}
Contract2Plan abstains (requests confirmation) when:
\begin{itemize}
\item a field lacks evidence spans (grounding failure),
\item conflicts remain in Class C clauses (exceptions, carve-outs, ambiguous cross-references),
\item scope/effective windows cannot be resolved and a conservative merge would change meaning,
\item substitutions require approvals not present in master data/AVL.
\end{itemize}
In practice, abstention frequency is deployment- and document-dependent, varying with contract heterogeneity and governance maturity; in environments with well-versioned addenda and clearly scoped clauses, most abstentions arise from genuinely non-monotone exception logic rather than routine MOQ, lead-time, or capacity terms.

\begin{table}[H]
\centering
\begin{tabular}{p{3.4cm}p{5.6cm}p{5.1cm}}
\toprule
Clause type & Safe automated action & Gate when ambiguous \\
\midrule
MOQ, lead time, capacity, cadence (Class A) & Precedence resolve; else conservative merge + verify & Human confirmation if Coverage likely missing \\
Price breaks / tiers (Class B) & Enforce eligibility; do not understate thresholds & Human confirmation if tier table OCR corrupt/unclear \\
Penalty clauses with exceptions (Class C) & Do not auto-merge; treat as policy parameters only when explicit & Human/legal review \\
Substitution approvals / AVL & Forbid substitution unless approval evidence exists & Human quality/compliance approval \\
Delivery terms affecting responsibility & Treat as scope metadata; avoid guessing costs & Human confirmation if responsibility unclear \\
\bottomrule
\end{tabular}
\caption{Constraint taxonomy and abstention policy. Contract2Plan is conservative where guarantees are justified and abstains where conservative merging is semantically unsafe.}
\label{tab:taxonomy}
\end{table}

\section{Evaluation}
\paragraph{Evaluation scope.}
Our evaluation is intentionally \emph{risk-illustrative} rather than performance-exhaustive:
we isolate key error modes (MOQ and lead-time mis-extraction) under a transparent execution model to quantify feasibility and tail-risk impacts.
We do not claim that the micro-benchmark captures the full diversity of industrial contract language, network structures, or operational policies.

We provide (i) a checkable toy example with explicit numeric calculations, (ii) a worked example illustrating BOM coupling and substitution gating, and (iii) a computed synthetic micro-benchmark quantifying the economic and compliance risk of MOQ and lead-time extraction errors.

\subsection{Toy example (fully computed by hand)}
Consider a single item over $T=3$ periods with demands $(50,50,50)$, initial inventory $0$, holding cost $h=0.1$ per unit per period, cheap supplier unit cost $c_{\text{cheap}}=10$, emergency cost $c_{\text{exp}}=20$. The true contract terms are MOQ $=100$ and lead time $L_{\text{true}}=2$ periods.

A feasible cheap order must be either $0$ or at least $100$, and if ordered in period 1 it arrives in period 3. Under the strict no-backlog service model, demand in periods 1 and 2 must be met via emergency purchases. If we order 100 in period 1, then execution cost is:
\[
\text{cheap} = 100\cdot 10 = 1000,\quad
\text{emergency} = (50+50)\cdot 20 = 2000,\quad
\text{holding} = \text{end inventory in period 3} \cdot 0.1.
\]
In period 3, the 100 units arrive and 50 are consumed, leaving 50 units; holding cost adds $50\cdot 0.1=5$. Total cost is $1000+2000+5=3005$.

Now suppose the extractor underestimates lead time as $L_{\text{ext}}=1$ and still outputs MOQ $=100$. The planner might choose to order 100 in period 2 (believing it arrives in period 3) instead of period 1, expecting to reduce holding. In execution, the order arrives in period 4 (outside the horizon), so all three periods are covered by emergency purchases: $150\cdot 20=3000$. Total executed cost becomes $3000+100\cdot 10=4000$, yielding regret $995$. This illustrates how a one-period lead-time underestimation can dominate cost even in a minimal setting, motivating solver-based verification and conservative handling of lead-time clauses.

\subsection{Worked example: BOM with an alternate and approval gate}
Figure~\ref{fig:bom} shows a two-level BOM. Finished good $FG$ requires subassembly $A$ and component $B$. Subassembly $A$ requires component $C$. Component $B'$ is an alternate for $B$ allowed only with explicit approval evidence.

\begin{figure}[H]
\centering
\begin{tikzpicture}[
  font=\small,
  node/.style={draw, circle, minimum size=8mm, inner sep=0pt},
  alt/.style={draw, circle, dashed, minimum size=8mm, inner sep=0pt},
  arrow/.style={-Latex, thick}
]
\node[node] (FG) {$FG$};
\node[node, below left=13mm and 18mm of FG] (A) {$A$};
\node[node, below right=13mm and 18mm of FG] (B) {$B$};
\node[alt, right=10mm of B] (Bp) {$B'$};
\node[node, below=13mm of A] (C) {$C$};

\draw[arrow] (A) -- node[left]{1} (FG);
\draw[arrow] (B) -- node[right]{1} (FG);
\draw[arrow] (C) -- node[left]{1} (A);

\draw[arrow, dashed] (Bp) -- node[above]{alt} (B);

\node[draw, rounded corners, align=left, text width=10.6cm, below=14mm of C] (note) {
\textbf{Substitution rule:} $B'$ can replace $B$ only if the AVL or contract provides approval evidence. Contract2Plan enforces this as a constraint: if approval is missing, $B'$ is forbidden and the system escalates rather than guessing.};

\end{tikzpicture}
\caption{Example BOM with an alternate component. Substitution is compliance-sensitive and is enforced as a constraint.}
\label{fig:bom}
\end{figure}

\subsection{Synthetic micro-benchmark (computed by exact enumeration)}
Large public datasets combining full contract text, BOMs, and multi-echelon networks are rare. Instead of fabricating empirical claims, we provide a focused micro-benchmark that is fully specified and computed exactly.

\textbf{Benchmark setting.} Single-item replenishment over horizon $T=5$. Demand must be met each period (no backlog). Two purchasing modes:
\begin{itemize}
\item a cheap supplier with MOQ and lead time, unit cost $c_{\text{cheap}}$,
\item an emergency purchase with immediate availability, unit cost $c_{\text{exp}}>c_{\text{cheap}}$.
\end{itemize}
Execution model: if the plan orders $0<q<MOQ_{\text{true}}$, the supplier uplifts to $MOQ_{\text{true}}$ (and the buyer pays for the uplifted quantity). If arrivals are late due to lead-time underestimation, emergency purchases fill the shortfall.

\textbf{Exact enumeration.} Cheap orders each period are restricted to:
\[
\{0,50,100,150,200,300,400,450,600\},
\]
so there are $9^5=59{,}049$ schedules per instance. For each instance we enumerate all schedules and select the minimum-cost one under extracted constraints, then execute it under true constraints.

\textbf{Instance generator and extraction error model.} We generate 500 instances with random seed 42:
\begin{itemize}
\item $d_t \sim \mathrm{Unif}\{0,\dots,80\}$ i.i.d.,
\item $L_{\text{true}} \sim \mathrm{Unif}\{1,2,3\}$,
\item $MOQ_{\text{true}} \sim \mathrm{Unif}\{50,100,150,200\}$,
\item $c_{\text{cheap}} \sim \mathrm{Unif}[6,12]$,
\item $c_{\text{exp}} = c_{\text{cheap}} + \mathrm{Unif}[4,12]$,
\item holding cost $h \sim \mathrm{Unif}[0.02,0.2]$.
\end{itemize}
Extraction errors: MOQ under-step by one level with probability 0.30; over-step by one with probability 0.10; lead time underestimated by 1 with probability 0.25; overestimated by 1 with probability 0.10 (clipped to the valid set).

\textbf{Regret.} Let $\Pi_{\text{ext}}$ be the schedule that is optimal under extracted constraints but executed under true constraints (MOQ uplift and true lead time). Let $\Pi^\star$ be the optimal schedule under true constraints. Regret is executed cost minus true-optimal cost.

\begin{table}[H]
\centering
\begin{tabular}{lr}
\toprule
Metric (500 instances; exact enumeration; seed 42) & Value \\
\midrule
Instances with any MOQ violation in planned orders & 83 / 500 = 16.6\% \\
Mean optimal cost under true constraints & \$2,636.13 \\
Mean executed cost of extraction-only plan & \$2,778.46 \\
Mean regret (absolute) & \$142.33 \\
Mean regret / mean optimal cost & 5.40\% \\
Median regret & \$0.00 \\
90th percentile regret & \$587.74 \\
95th percentile regret & \$955.83 \\
99th percentile regret & \$1,569.61 \\
Maximum regret & \$2,242.22 \\
Fraction with regret $>0$ & 27.2\% \\
Mean regret conditional on MOQ violation & \$523.68 \\
90th percentile regret conditional on MOQ violation & \$1,418.43 \\
\bottomrule
\end{tabular}
\caption{Micro-benchmark results computed by exact enumeration. Extraction errors induce heavy-tailed economic risk and nontrivial compliance risk.}
\label{tab:micro1}
\end{table}

\begin{table}[H]
\centering
\begin{tabular}{llrrrr}
\toprule
MOQ relation & Lead time relation & \# inst & Mean regret & Median regret & MOQ viol. rate \\
\midrule
equal & equal & 265 & 0.00 & 0.00 & 0.0\% \\
equal & over & 21 & 160.93 & 42.68 & 0.0\% \\
equal & under & 70 & 242.74 & 12.25 & 0.0\% \\
over & equal & 21 & 102.53 & 9.71 & 0.0\% \\
over & over & 4 & 699.87 & 758.18 & 0.0\% \\
over & under & 7 & 280.04 & 209.44 & 0.0\% \\
under & equal & 83 & 343.02 & 0.00 & 71.1\% \\
under & over & 5 & 467.75 & 0.00 & 80.0\% \\
under & under & 24 & 544.70 & 214.57 & 83.3\% \\
\bottomrule
\end{tabular}
\caption{Error-pattern decomposition. Joint underestimation of MOQ and lead time is the highest-risk regime, combining uplift-induced overbuying and late-arrival emergency purchases.}
\label{tab:micro2}
\end{table}

\section{Discussion}
\textbf{When Contract2Plan helps.} The approach is most valuable when (i) operational terms frequently change via addenda, (ii) pricing eligibility and MOQs materially affect feasibility and cost, (iii) BOM coupling amplifies upstream errors, and (iv) organizations require auditable decision trails.

\textbf{When it can fail.} Conservative repair can become overly restrictive and inflate cost, especially when the strictest retrieved constraint is not actually applicable (scope misunderstanding). More importantly, Coverage can fail: if retrieval misses a strict addendum, no conservative merge can guarantee compliance. Non-monotone clauses (rebates, penalty exceptions, legal carve-outs) require abstention and human confirmation.

\textbf{Practical deployment.} A realistic deployment should (i) use document governance (version metadata, effective dates, signatures), (ii) adopt role-based access control and audit logs, and (iii) treat the verifier as a hard gate before execution.

\section{Security, Governance, and Ethics}
Procurement documents are untrusted inputs: they may contain inconsistent terms, malformed tables, or adversarial instructions (prompt injection). OWASP identifies prompt injection and insecure output handling as major risks for LLM applications \cite{owasp_llm_top10}. Contract2Plan mitigates these risks by treating retrieved text as data, enforcing schema-only extraction, requiring evidence spans for all constraints, and preventing untrusted text from directly triggering tool actions.

Over-automation risk is real: procurement decisions can affect supplier fairness and compliance. Contract2Plan is designed as decision support with explicit abstention rules, not as an autonomous procurement agent. In deployment, procurement and legal teams should control the ``human gate'' policy and review workflows.

\section{Limitations}
Key limitations include:
\begin{itemize}
\item The conservative compliance theorem requires \emph{Coverage}; if retrieval misses the strictest applicable clause, conservative merging cannot guarantee compliance.
\item Many economically important clauses are non-monotone (rebates, exceptions); abstention is necessary but increases human workload.
\item The micro-benchmark is designed to highlight tail risk from specific extraction errors and should not be interpreted as an end-to-end performance leaderboard.
\item Industrial-scale BOMs and networks can yield large MILPs; decomposition and rolling-horizon strategies are needed in practice.
\item Our micro-benchmark is synthetic and isolates only MOQ and lead time errors; real deployments must evaluate broader clause families and document modalities (scanned exhibits, multi-currency pricing tables).
\end{itemize}

\section{Conclusion}
Contract2Plan operationalizes a key principle for GenAI-enabled planning: verification before action. By grounding contract-derived constraints in evidence spans, compiling them into an explicit planning model, and using solver-driven feasibility checks with targeted repair and conservative merges for provably safe clause classes, Contract2Plan produces auditable plans that are feasible and compliance-aware. A computed synthetic benchmark quantifies heavy-tailed economic and compliance risk from MOQ and lead-time extraction errors, motivating verification as a first-class component of contract-grounded planning systems.

\FloatBarrier
\clearpage
\appendix

\section{Verifier Pseudocode }
\begin{algorithm}[H]
\caption{Contract2Plan verifier and repair loop}
\label{alg:verifier}
\begin{algorithmic}[1]
\Require Documents $\mathcal{D}$, master data $\mathcal{M}$, BOM $\mathcal{B}$, network $\mathcal{G}$, demand $d$
\Ensure Feasible plan $\Pi$, decision cards $E$, audited constraints $\mathcal{C}$
\State $R \gets$ \textsc{RetrieveEvidence}$(\mathcal{D}, \mathcal{M})$
\State $\mathcal{C} \gets$ \textsc{ExtractConstraints}$(R)$
\State $\mathcal{C} \gets$ \textsc{Normalize}$(\mathcal{C}, \mathcal{M})$
\For{$iter = 1$ to $I_{\max}$}
  \State $(ok_s, issues_s) \gets$ \textsc{SchemaValidate}$(\mathcal{C})$
  \State $(ok_p, issues_p) \gets$ \textsc{ProvenanceCheck}$(\mathcal{C}, R)$
  \State $(ok_c, conflicts) \gets$ \textsc{ConsistencyCheck}$(\mathcal{C})$
  \If{not $ok_s$ or not $ok_p$}
     \State $R \gets$ \textsc{TargetedRetrieve}$(\mathcal{D}, issues_s, issues_p)$
     \State $\mathcal{C} \gets$ \textsc{FocusedReExtract}$(R, issues_s, issues_p)$
     \State $\mathcal{C} \gets$ \textsc{Normalize}$(\mathcal{C}, \mathcal{M})$
     \State \textbf{continue}
  \EndIf
  \If{not $ok_c$}
     \State $\mathcal{C} \gets$ \textsc{ResolveByPrecedence}$(\mathcal{C}, conflicts, \mathcal{M})$
     \State $(ok_c2, conflicts2) \gets$ \textsc{ConsistencyCheck}$(\mathcal{C})$
     \If{not $ok_c2$}
        \State $\mathcal{C} \gets$ \textsc{ConservativeMergeClassA}$(\mathcal{C}, conflicts2)$
        \State $(ok_c3, conflicts3) \gets$ \textsc{ConsistencyCheck}$(\mathcal{C})$
        \If{not $ok_c3$}
           \State \Return \textsc{HumanGate}$(conflicts3)$
        \EndIf
     \EndIf
  \EndIf
  \State $M \gets$ \textsc{CompileMILP}$(\mathcal{C}, \mathcal{M}, \mathcal{B}, \mathcal{G}, d)$
  \State $(feas, report) \gets$ \textsc{FeasibilityCheck}$(M)$
  \If{$feas$}
     \State $\Pi \gets$ \textsc{Optimize}$(M)$
     \State $E \gets$ \textsc{BuildDecisionCards}$(\Pi, \mathcal{C}, R)$
     \State \Return $(\Pi, E, \mathcal{C})$
  \Else
     \State $action \gets$ \textsc{SelectRepairAction}$(report)$
     \If{$action ==$ \textsc{TargetedRetrieve}}
        \State $R \gets$ \textsc{TargetedRetrieve}$(\mathcal{D}, report)$
        \State $\mathcal{C} \gets$ \textsc{FocusedReExtract}$(R, report)$
        \State $\mathcal{C} \gets$ \textsc{Normalize}$(\mathcal{C}, \mathcal{M})$
     \ElsIf{$action ==$ \textsc{ConservativeMerge}}
        \State $\mathcal{C} \gets$ \textsc{ConservativeMergeClassA}$(\mathcal{C}, report)$
     \Else
        \State \Return \textsc{HumanGate}$(report)$
     \EndIf
  \EndIf
\EndFor
\State \Return \textsc{HumanGate}$(\mathcal{C})$
\end{algorithmic}
\end{algorithm}

\section{Micro-benchmark reproducibility (exact enumeration)}
Horizon $T=5$ and action set $\{0,50,100,150,200,300,400,450,600\}$ yields $9^5=59{,}049$ schedules. For each instance:
\begin{enumerate}
\item enumerate all schedules and compute planning cost under extracted $(MOQ_{\text{ext}},L_{\text{ext}})$ (invalid if any $0<q<MOQ_{\text{ext}}$),
\item select the minimum-cost extracted-optimal schedule,
\item execute it under true $(MOQ_{\text{true}},L_{\text{true}})$ with MOQ uplift and emergency purchases for late arrivals,
\item compute regret relative to the true-optimal schedule (also computed by enumeration).
\end{enumerate}

\section{Reproducibility checklist (practical)}
A minimal reproducibility bundle should include:
\begin{itemize}
\item exact prompt templates for retrieval and schema extraction,
\item the schema definition (Table~\ref{tab:schema}) and normalization rules,
\item the precedence rules and abstention policy (Section~\ref{sec:compliance}),
\item MILP compiler settings (solver, tolerances, integrality gap, time limits),
\item verifier configuration (IIS vs slack mode, repair iteration limit),
\item code for the micro-benchmark generator and enumerator.
\end{itemize}

\FloatBarrier
\clearpage

\end{document}